\newtheorem{example}{Example} 
\newtheorem{definition}{Definition}
\newtheorem{theorem}{Theorem}
\newtheorem{proof}{Proof}
\newtheorem{observation}{Observation}
\def\BibTeX{{\rm B\kern-.05em{\sc i\kern-.025em b}\kern-.08em
    T\kern-.1667em\lower.7ex\hbox{E}\kern-.125emX}}
\begin{document}

\title{From Statistical Disclosure Control to Fair AI: Navigating Fundamental Tradeoffs in Differential Privacy}

\author{\IEEEauthorblockN{Adriana Watson}
\IEEEauthorblockA{\textit{School of Engineering Technology} \\
\textit{Purdue University}\\
West Lafayette, United States \\
watso213@purdue.edu}}

\maketitle

\begin{abstract}
Differential privacy has become the gold standard for privacy-preserving machine learning systems. Unfortunately, subsequent work has primarily fixated on the privacy-utility tradeoff, leaving the subject of fairness constraints undervalued and under-researched. This paper provides a systematic treatment connecting three threads: (1) Dalenius's impossibility results for semantic privacy, (2) Dwork's differential privacy as an achievable alternative, and (3) emerging impossibility results from the addition of a fairness requirement. Through concrete examples and technical analysis, the three-way Pareto frontier between privacy, utility, and fairness is demonstrated to showcase the fundamental limits on what can be simultaneously achieved. In this work, these limits are characterized, the impact on minority groups is demonstrated, and practical guidance for navigating these tradeoffs are provided. This forms a unified framework synthesizing scattered results to help practitioners and policymakers make informed decisions when deploying private fair learning systems.
\end{abstract}

\section{Introduction: The Central Problem}

Consider the following scenario: a hospital wants to publish statistics about patient health outcomes to advance medical research, but must protect individual patient privacy. What information can be released and what should remain private?

While this scenario may seem simple, further analysis reveals a layer of complexity as the challenge is not merely technical but conceptual: \emph{what does it mean for a data release to be ``private''?} Early approaches to this problem focused on removing direct identifiers (such as names, social security numbers, etc). While this is an intuitive solution, it is insufficient as auxiliary information can enable re-identification. This highlights a fundamental concern: can we define privacy in a way that is both meaningful and achievable? This paper makes three contributions:
\begin{enumerate}
    \item \textbf{Conceptual synthesis: } Dalenius's impossibility results are connected to modern fairness constraints, showing these are not independent problems but manifestations of fundamental information-theoretic limits.
    \item \textbf{Practical characterization: } Concrete bounds on the privacy-utility-fairness tradeoff are provided, including sample size requirement and group-specific error rates.
    \item \textbf{Decision framework: } Existing mitigation strategies are surveyed, and guidance for practitioners choosing between competing 
      approaches is provided.
\end{enumerate}

While individual results exist, no prior work provides this integrated treatment with worked examples spanning from theoretical foundations to deployment decisions.

\section{The Impossibility of Statistical Disclosure Control}
\label{sec:impossibility}

\subsection{Dalenius's Intuitive Goal}

In his now-famous 1977 article, Dalenius proposed an intuitive privacy guarantee \cite{dalenius_towards_1977}:

\begin{quote}
\emph{``Access to a statistical database should not enable one to learn anything about an individual that could not be learned without access.''}
\end{quote}

Formally, let $D$ be a database and $\mathcal{M}$ be a mechanism that releases information about $D$, such as the publication of survey results. For any individual $i$ and any property $P$, Dalenius's goal requires:
\begin{equation}
\Pr[P(i) | \mathcal{M}(D)] \approx \Pr[P(i)]
\end{equation}

To supplement this idea, he defined three modes of disclosure:
\begin{enumerate}
    \item \textbf{Identity Disclosure} occurs when a record in the database can be linked to a specific individual. For example, if an attacker can determine which database entry belongs to an individual in the database, by convention, we will refer to this individual as Alice.
    \item \textbf{Attribute Disclosure} is present when new, possibly sensitive, information about an individual can be determined. For example, if Alice's medical condition can be deduced from released statistics. 
    \item \textbf{Inferential Disclosure} occurs when specific characteristics can be intuited with higher accuracy than before. For example, if Alice is in a smoking related dataset which discloses that 99 of 100 individuals in her zip code smoke, one can guess with high accuracy that she is likely a smoker. 
\end{enumerate}

The concept of direct and indirect identifiers also further defines the statistical disclosure problem. Direct identifiers encompass any information that, on its own, can be linked directly to an individual. While it seems logical to simply remove this data in order to preserve privacy, the identification problem still exists. This is because indirect identifiers can also positively identify an individual when combined. For example, Perry found in her 2011 study that individuals can be positively identified with 87\% accuracy if their zip code, birth date, and gender are available \cite{perry_youre_2011}. This finding is concerning as this seemingly innocuous information, when combined, clearly violates Dalenius's initial privacy goal. 

While the idea of statistical disclosure control (SDC) is intuitively appealing, the definition collapses under formal analysis.

\subsection{Why Semantic Security Fails: A Concrete Example}

\begin{example}[The Smoking Dataset]
\label{ex:smoking}
Consider a medical database $D$ containing records of 1000 patients, including whether each smokes. Suppose:
\begin{itemize}
    \item 99\% of people with a rare genetic marker $G$ smoke
    \item Alice has genetic marker $G$ (publicly known via a genetic testing company)
    \item The database releases: ``52\% of patients smoke''
\end{itemize}

\textbf{Before seeing the database release:}
\begin{equation}
\Pr[\text{Alice smokes}] = 0.99 \quad \text{(from public genetic data)}
\end{equation}

\textbf{After seeing the database release:}
If Alice were \emph{not} in the database, the smoking rate would be significantly different from 52\% (since she likely smokes). This is because Alice’s presence (or absence) in the database produces a detectable, statistical shift as the population outside of the database is dominated by smokers with $G$. Therefore:
\begin{equation}
\Pr[\text{Alice is in database} \mid \text{release}] \approx 1
\end{equation}

The conditions of naive disclosure control have been met: only an aggregate statistic is released, not individual data. However, seeing the data makes it possible to learn that Alice is in a medical database, a piece of private information that wasn't known before seeing the statistical database. 
\end{example}

\subsection{The General Impossibility Result}

Example~\ref{ex:smoking} is not an edge case. Dwork proved that Dalenius's goal is fundamentally impossible \cite{dwork_differential_2006}.

Ultimately, it became clear that Dalenius's vision for statistical disclosure control faced three insurmountable challenges which arose from the very definition of SDC:
\begin{enumerate}
    \item Auxiliary information is ubiquitous; people know information about each other from sources outside the database. 
    \item "Learning nothing new" is ill-defined. The mechanism for measuring what an individual or entity "already knew" is unclear and restrictive. 
    \item Useful data analysis inherently reveals some information and if no information is revealed, the data isn't useful. This conflict introduces the first instance of the Privacy-Utility tradeoff. 
\end{enumerate}

The nature of these challenges can be combined to devise an informal theorem on the impossibility of SDC: 

\begin{theorem}[Informal]
Any mechanism that releases non-trivial information about a database violates SDC for some auxiliary information distribution.
\end{theorem}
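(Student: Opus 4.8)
The plan is to prove the statement constructively: for an arbitrary mechanism $\mathcal{M}$ that releases ``non-trivial'' information, I would exhibit an explicit auxiliary-information distribution together with a target predicate $P$ about Alice whose posterior under the release differs substantially from its prior, directly contradicting the SDC requirement $\Pr[P(i)\mid\mathcal{M}(D)]\approx\Pr[P(i)]$. The first step is to pin down ``non-trivial.'' I would call $\mathcal{M}$ non-trivial if there is a utility vector $w=w(D)$ that an analyst can reconstruct from $\mathcal{M}(D)$ (up to small error) with high probability, yet which carries genuine entropy relative to an adversary's prior over databases — intuitively, the release conveys something about $D$ that could not have been guessed beforehand. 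The contrapositive is clean: a mechanism for which no such $w$ exists has output that is essentially independent of $D$, which is exactly the trivial (useless) case excluded by the hypothesis. This also connects to the three informal obstacles already identified, since it is the third (useful analysis must reveal information) that guarantees the existence of $w$.

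With utility fixed, the heart of the argument is a one-time-pad construction in the spirit of Dwork and Naor. Let $b=P(\text{Alice})\in\{0,1\}$ be the sensitive bit we wish to expose, and let $f$ be a balanced predicate of the utility vector that is near-uniform from the adversary's prior. I would define the auxiliary information to be $z = b \oplus f(w)$. The proof then splits into two estimates. First, given the release: the adversary reconstructs $w$ from $\mathcal{M}(D)$, evaluates $f(w)$, and unmasks $b = z \oplus f(w)$, so that $\Pr[P(\text{Alice})\mid\mathcal{M}(D),z]\approx 1$. Second, without the release: since $f(w)$ is near-uniform and independent of $b$ from the adversary's viewpoint, $z$ behaves as a one-time pad, giving $\Pr[P(\text{Alice})\mid z]\approx\Pr[P(\text{Alice})]$. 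The gap between these two quantities is the SDC violation, and because the construction uses only the assumed utility of $\mathcal{M}$, it holds for every non-trivial mechanism and for the specific auxiliary distribution we built — matching the ``for some auxiliary information distribution'' clause of the statement.

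The hard part will be reconciling the two demands placed on $w$ while tolerating the approximate recovery inherent in any realistic mechanism. Concretely, $w$ is recovered only up to some error $\varepsilon$, so the masking predicate $f$ must be robust: I would require $f(w')=f(w)$ whenever $w'$ is $\varepsilon$-close to $w$, which points toward a rounding or error-correcting (fuzzy-extractor-style) construction of $f$ rather than an arbitrary balanced function. Simultaneously, $f(w)$ must act as a genuine pad, meaning $w$ must retain enough min-entropy relative to the adversary's prior for $f(w)$ to be near-uniform and independent of $b$. Quantifying this entropy gap — showing that non-trivial utility forces enough a priori uncertainty in $w$ that a balanced, perturbation-stable $f$ exists — is the crux of the argument, and it is precisely where the freedom to choose the auxiliary distribution adversarially is exploited.

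Finally, I would close the loop by confirming the edge cases are genuinely excluded rather than hidden: if $w$ has negligible entropy from the prior then no balanced $f$ exists, but in that regime the release was predictable and hence trivial, so the hypothesis fails and there is nothing to prove. This dichotomy — either $\mathcal{M}$ is trivial, or the entropy in $w$ powers the one-time-pad breach — is what makes the impossibility universal and mirrors the informal intuition that ``useful data analysis inherently reveals some information.''
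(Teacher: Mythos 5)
Your proposal is sound, but it takes a genuinely different route from the paper. The paper's proof sketch (given under the formal version of this theorem) is a direct Bayesian membership-inference argument: it defines non-triviality as the existence of two databases $D_1, D_2$ with distinguishable output distributions, takes the auxiliary information $\mathcal{A}^*$ to be complete knowledge of every record except individual $i$'s, lets the violated predicate be ``$i$'s record appears in $D$,'' and applies Bayes' theorem to show the posterior on membership must move away from the prior. Your argument is instead the one-time-pad construction of Dwork and Naor: you encode a sensitive bit $b$ about Alice as $z = b \oplus f(w)$ where $w$ is a utility vector recoverable from the release and $f$ is a balanced, perturbation-robust predicate, so that the release unmasks $b$ exactly while $z$ alone is a perfect pad. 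The two approaches buy different things. The paper's version is more elementary and needs only distinguishability of outputs, but it establishes a weaker form of harm (membership disclosure under a very strong auxiliary prior), and it quietly needs a hybrid step to pass from ``some pair $D_1, D_2$ differ'' to ``a neighboring pair differing in $i$'s record differ.'' Your version is the historically correct and stronger argument: it exposes an arbitrary sensitive attribute, it works even when Alice is not in the database, and it makes the role of auxiliary information as an adversarially chosen channel explicit. The price is the machinery you correctly flag as the crux --- quantifying the min-entropy of $w$ under the adversary's prior and building a balanced $f$ that is stable under the $\varepsilon$-approximate recovery of $w$ (a fuzzy-extractor argument) --- none of which the paper's sketch requires. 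You also adopt a stronger notion of ``non-trivial'' (a reconstructible, high-entropy utility vector) than the paper's (distinguishable output distributions); both are legitimate formalizations of an informal hypothesis, but you should state explicitly that your construction needs the stronger one, since mere distinguishability of two output distributions does not by itself yield a $w$ with enough entropy to power the pad.
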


This is because the database \emph{itself} is information; its existence changes what can be inferred. Statistical disclosure demands the impossible: that information is released without releasing information.

More formally, Dwork proved this as follows: 
\begin{theorem}[Impossibility of Semantic Security \cite{dwork_differential_2006}]
Let $\mathcal{M}$ be any mechanism that takes a database $D$ as input and produces output $M(D)$. Suppose $\mathcal{M}$ satisfies Dalenius's statistical disclosure control condition: for all individuals $i$, properties $P$, and auxiliary information $\mathcal{A}$,
\begin{equation}
\left| \Pr[P(i) \mid M(D), \mathcal{A}] - \Pr[P(i) \mid \mathcal{A}] \right| \leq \delta
\end{equation}
for some negligible $\delta > 0$.

Then either:
\begin{enumerate}
    \item $\mathcal{M}$ releases no information about $D$ (i.e., $M(D)$ is independent of $D$), or
    \item There exists auxiliary information $\mathcal{A}^*$ such that the statistical disclosure control condition is violated for some individual $i^*$ and property $P^*$.
\end{enumerate}
\end{theorem}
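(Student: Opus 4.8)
The plan is to argue by contraposition: assume the mechanism $\mathcal{M}$ does release some non-trivial information about $D$ (so alternative~(1) fails, i.e.\ $M(D)$ is not independent of $D$), and then explicitly construct auxiliary information $\mathcal{A}^*$, an individual $i^*$, and a property $P^*$ for which the disclosure gap exceeds $\delta$, thereby forcing alternative~(2). The engine of the construction is a one-time-pad encoding: I would take a bit $w(D)$ that the output $M(D)$ lets an adversary predict with non-negligible advantage, designate the secret $s = P^*(i^*)$ to be a fresh uniform bit independent of $D$, and publish the auxiliary string $\mathcal{A}^* = w(D) \oplus s$. The intuition mirrors Example~\ref{ex:smoking}: the release is harmless in isolation, but combined with the right side information it unlocks a fact about the individual that was previously masked.

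Concretely, the steps in order are: (i) extract from the dependence of $M(D)$ on $D$ a predicate $w$ and a guess $\hat{w}(M(D))$ with $\Pr[\hat{w}(M(D)) = w(D)] \geq \tfrac{1}{2} + \gamma$ for some non-negligible $\gamma$; (ii) let $s$ be a uniform bit and set $P^*(i^*)$ to be the event $\{s = 1\}$, with $\mathcal{A}^* = w(D) \oplus s$; (iii) compute the baseline posterior, noting that since the pad $w(D)$ masks $s$, the string $\mathcal{A}^*$ alone is statistically independent of $s$, so $\Pr[P^*(i^*) \mid \mathcal{A}^*] = \tfrac{1}{2}$; (iv) compute the informed posterior, where the adversary forms $\hat{s} = \hat{w}(M(D)) \oplus \mathcal{A}^*$ and recovers $s$ exactly whenever $\hat{w}$ is correct, giving $\Pr[P^*(i^*) \mid M(D), \mathcal{A}^*] \geq \tfrac{1}{2} + \gamma$; and (v) conclude that the gap is at least $\gamma$, which exceeds any negligible $\delta$, contradicting the SDC hypothesis. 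The two conditional-probability computations in (iii) and (iv) are routine once the pad is in place.

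The main obstacle is step~(i): turning the bare hypothesis ``$M(D)$ is not independent of $D$'' into a concrete bit $w(D)$ together with a predictor whose advantage is bounded away from zero. Mere statistical dependence guarantees two databases $D_0, D_1$ whose output distributions $M(D_0), M(D_1)$ differ in total variation by some $\eta > 0$; the maximum-likelihood distinguisher then yields a guess with advantage $\eta/2$, and I would take $w$ to be the indicator of which of $D_0, D_1$ is in play. Making this fully rigorous requires deciding whether ``non-trivial information'' means positive-probability dependence or a genuine utility guarantee on the release; I would adopt the latter, explicitly positing that any useful mechanism conveys some target statistic to within non-negligible accuracy, since this matches Dwork's original utility-based formulation and ensures $\gamma$ is non-negligible rather than merely positive. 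I would also take care to quantify $\delta$ correctly: the contradiction requires $\gamma > \delta$, so the argument precisely establishes that the gap cannot be driven to negligible, which is exactly the claimed dichotomy.
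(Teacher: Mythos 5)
Your proposal is correct in substance but takes a genuinely different route from the paper's proof sketch. The paper argues via membership inference: it sets $P^*(i)$ to be the predicate ``$i$'s record appears in $D$,'' takes $\mathcal{A}^*$ to be complete knowledge of all other individuals plus the mechanism, and applies Bayes' theorem to show the posterior on membership shifts away from the prior once $M(D)$ is observed. You instead use the one-time-pad construction ($\mathcal{A}^* = w(D) \oplus s$ with $s$ a fresh secret bit), which is in fact closer to Dwork and Naor's original argument. Your version buys something real: the secret $s$ can concern an individual who is \emph{not even in the database}, which is the full force of the impossibility result (the paper's membership predicate only exhibits harm to participants, and its sketch quietly assumes the output distributions differ specifically with $i$'s presence versus absence, which does not follow immediately from ``$M(D_1) \not\equiv M(D_2)$'' for arbitrary $D_1, D_2$). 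The paper's version buys brevity and an intuitive connection to Example~\ref{ex:smoking}. One point in your step (iii) needs care: $\mathcal{A}^* = w(D) \oplus s$ with $s$ uniform guarantees that $\mathcal{A}^*$ is independent of $w(D)$, not of $s$; a short computation gives $\Pr[s = \sigma \mid \mathcal{A}^* = a] = \Pr[w(D) = a \oplus \sigma]$, so the baseline posterior equals $1/2$ only if $w(D)$ is itself uniform. This is repairable --- fix the prior on databases to be uniform over $\{D_0, D_1\}$ so that your indicator bit $w(D)$ is uniform, or more generally assume the utility vector has high min-entropy as Dwork and Naor do --- but as written the ``pad masks $s$'' claim has the roles of key and message reversed, and you should state the uniformity assumption explicitly before computing the baseline.
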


\begin{proof}[Proof Sketch]
Suppose $\mathcal{M}$ releases non-trivial information, meaning there exist databases $D_1, D_2$ such that the output distributions differ: $M(D_1) \not\equiv M(D_2)$. 

Let $P*(i)$ be the predicate representing whether individual $i$’s record appears in $D$. Take auxiliary information $\mathcal{A}^*$ to be complete knowledge of all individuals except $i$, plus the mechanism $\mathcal{M}$ itself.

Since $M(D_1)$ and $M(D_2)$ differ, observing $M(D)$ provides information about which database is being used. By Bayes' theorem:
\begin{equation}
\Pr[i \in D \mid M(D), \mathcal{A}^*] = \frac{\Pr[M(D) \mid i \in D, \mathcal{A}^*] \cdot \Pr[i \in D \mid \mathcal{A}^*]}{\Pr[M(D) \mid \mathcal{A}^*]}
\end{equation}

If the output distributions differ significantly when $i$ is present vs absent, the posterior probability $\Pr[i \in D \mid M(D), \mathcal{A}^*]$ will differ from the prior $\Pr[i \in D \mid \mathcal{A}^*]$, violating statistical disclosure control.
\end{proof}

\section{Differential Privacy: An Achievable Alternative}
\label{sec:dp}
In 2006, Dwork presented the idea of Differential Privacy (DP), a groundbreaking privacy metric that addresses Dalenius's impossibility by fundamentally shifting the goal of privacy. 

Rather than requiring that the database reveal \emph{nothing}, differential privacy ensures that any individual's presence or absence has \emph{minimal impact} on the output distribution.

Formally: 
\begin{definition}[Differential Privacy \cite{dwork_differential_2006}]
A randomized mechanism $\mathcal{M}: \mathcal{D} \to \mathcal{R}$ satisfies $\varepsilon$-differential privacy if for all neighboring databases $D, D'$ differing in one record, and all possible outputs $S \subseteq \mathcal{R}$:
\begin{equation}
\Pr[\mathcal{M}(D) \in S] \leq e^\varepsilon \cdot \Pr[\mathcal{M}(D') \in S]
\end{equation}
\end{definition}

\textbf{Interpretation:} Whether an individual is in the database or not changes the probability of any outcome by at most a multiplicative factor of $e^\varepsilon$.

\subsection{The Laplace Mechanism}

The Laplace mechanism is the canonical method for achieving differential privacy on counting queries.

\begin{definition}[Sensitivity]
The $\ell_1$-sensitivity of a function $f: \mathcal{D} \to \mathbb{R}^k$ is:
\begin{equation}
\Delta f = \max_{D, D' \text{ neighbors}} \|f(D) - f(D')\|_1
\end{equation}
\end{definition}

For counting queries, $\Delta f = 1$ since adding or removing one individual changes the count by at most 1.

\begin{algorithm}[h]
\caption{Laplace Mechanism}
\label{alg: laplace}
\begin{algorithmic}[1]
\REQUIRE Database $D$, query function $f$, privacy parameter $\varepsilon$
\STATE Compute sensitivity $\Delta f$
\STATE Compute true answer: $a = f(D)$
\STATE Sample noise: $\eta \sim \text{Lap}(\Delta f / \varepsilon)$
\RETURN $\tilde{a} = a + \eta$
\end{algorithmic}
\end{algorithm}

The Laplace distribution with scale parameter $b$ has probability density:
\begin{equation}
p(\eta \mid b) = \frac{1}{2b} e^{-|\eta|/b}
\end{equation}

\begin{theorem}[Privacy Guarantee of Laplace Mechanism \cite{dwork_differential_2009}]
Algorithm~\ref{alg: laplace} satisfies $\varepsilon$-differential privacy.
\end{theorem}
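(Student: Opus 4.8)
The plan is to prove the $\varepsilon$-DP guarantee of the Laplace mechanism directly from the definition by bounding the ratio of output densities on neighboring databases. First I would fix arbitrary neighboring databases $D, D'$ and fix an arbitrary output point $z \in \mathbb{R}^k$. Writing $a = f(D)$ and $a' = f(D')$, the mechanism's output density at $z$ is the Laplace density centered at the true answer, namely $p_D(z) = \prod_{j=1}^k \frac{1}{2b} e^{-|z_j - a_j|/b}$ with scale $b = \Delta f / \varepsilon$, and similarly for $p_{D'}(z)$. The core of the argument is to bound the pointwise ratio $p_D(z)/p_{D'}(z)$.

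Next I would compute this ratio as $\prod_{j=1}^k e^{(|z_j - a'_j| - |z_j - a_j|)/b}$ and apply the triangle inequality in the form $|z_j - a'_j| - |z_j - a_j| \le |a_j - a'_j|$ to each coordinate. Summing over $j$ gives an exponent bounded by $\frac{1}{b}\sum_{j=1}^k |a_j - a'_j| = \frac{\|f(D) - f(D')\|_1}{b}$, and by the definition of $\ell_1$-sensitivity this is at most $\frac{\Delta f}{b} = \frac{\Delta f}{\Delta f / \varepsilon} = \varepsilon$. Hence $p_D(z) \le e^{\varepsilon} p_{D'}(z)$ at every point $z$.

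Finally I would lift this pointwise density bound to the set-based statement in the definition of differential privacy. For any measurable $S \subseteq \mathbb{R}^k$, I integrate the inequality over $S$:
\begin{equation}
\Pr[\mathcal{M}(D) \in S] = \int_S p_D(z)\, dz \le e^{\varepsilon} \int_S p_{D'}(z)\, dz = e^{\varepsilon}\,\Pr[\mathcal{M}(D') \in S].
\end{equation}
Since $D, D', S$ were arbitrary, this establishes $\varepsilon$-differential privacy.

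The routine part is the triangle-inequality bound on the exponent; the step that most deserves care is ensuring the sensitivity is correctly tied to the noise scale, since the whole argument hinges on the choice $b = \Delta f / \varepsilon$ exactly cancelling the sensitivity term to yield $\varepsilon$. I expect the main subtlety, rather than any deep obstacle, to be the reliance on the product (independence) structure of the multivariate Laplace noise so that the per-coordinate exponents add, and on invoking $\|f(D)-f(D')\|_1 \le \Delta f$ with the maximum in the sensitivity definition taken over \emph{all} neighboring pairs, which guarantees the bound holds uniformly for the particular $D, D'$ at hand.
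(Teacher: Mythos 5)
Your proposal is correct and follows essentially the same route as the paper's proof: bound the ratio of Laplace densities via the triangle inequality so that the sensitivity cancels against the noise scale $b=\Delta f/\varepsilon$, then integrate over the output set $S$. You are somewhat more careful than the paper in making the pointwise density bound explicit before integrating and in handling the general $k$-dimensional case with the $\ell_1$ norm (the paper's displayed computation is written for scalar outputs and bounds the ratio of integrals directly), but the underlying argument is identical.
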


\begin{proof}
Let $D, D'$ be neighboring databases that differ in one record. Let $a = f(D)$ and $a' = f(D')$. By sensitivity, $|a - a'| \leq \Delta f$.

For any output set $S \subseteq \mathbb{R}$:
\begin{align}
\frac{\Pr[M(D) \in S]}{\Pr[M(D') \in S]} &= \frac{\int_S p(\eta - a \mid \Delta f/\varepsilon) d\eta}{\int_S p(\eta - a' \mid \Delta f/\varepsilon) d\eta} \\
&= \frac{\int_S e^{-\varepsilon|s - a|/\Delta f} ds}{\int_S e^{-\varepsilon|s - a'|/\Delta f} ds} \\
&\leq e^{\varepsilon |a - a'|/\Delta f} \\
&\leq e^{\varepsilon}
\end{align}
where the final inequality uses $|a - a'| \leq \Delta f$.
\end{proof}

\begin{figure}[h]
\centering
\begin{tikzpicture}[scale=1.0]
    \draw[->] (-4,0) -- (4,0) node[right] {$\eta$};
    \draw[->] (0,0) -- (0,3) node[above] {$p(\eta)$};
    
    \draw[blue, thick, domain=-3.5:3.5, samples=100] 
        plot (\x, {0.5*exp(-abs(\x))}) node[right] {};
    \node[blue] at (2.5, 1.8) {$\varepsilon = 1$};
    
    \draw[red, thick, domain=-3.5:3.5, samples=100] 
        plot (\x, {0.25*exp(-0.5*abs(\x))}) node[right] {};
    \node[red] at (2.5, 1.2) {$\varepsilon = 0.5$};
    
    \draw[green!60!black, thick, domain=-3.5:3.5, samples=100] 
        plot (\x, {exp(-2*abs(\x))}) node[right] {};
    \node[green!60!black] at (2.5, 2.4) {$\varepsilon = 2$};
    
    \draw[dashed] (0,0) -- (0,2.5);
    \node[below] at (0,0) {$0$};
    
    \node[align=center] at (0, -0.8) {Smaller $\varepsilon$ = More privacy = More noise};
\end{tikzpicture}
\caption{Laplace noise distributions for different privacy parameters $\varepsilon$. The scale parameter is $b = 1/\varepsilon$ for sensitivity $\Delta f = 1$. Higher privacy (smaller $\varepsilon$) results in wider distributions and more noise added to the true answer.}
\label{fig:laplace}
\end{figure}
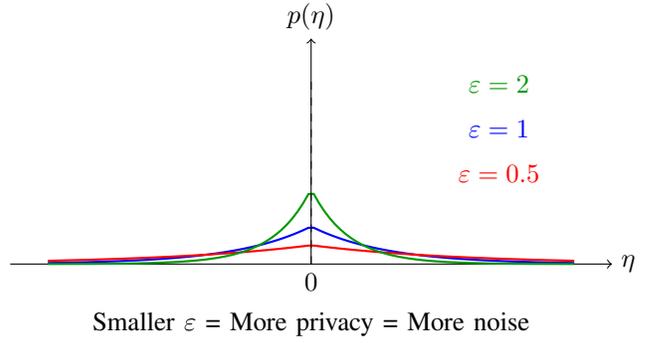

\subsection{Composition and Privacy Accounting}

Differential privacy is also advantageous because privacy guarantees \emph{compose}.

\begin{theorem}[Basic Composition \cite{dwork_differential_2006}]
If $\mathcal{M}_1$ satisfies $\varepsilon_1$-DP and $\mathcal{M}_2$ satisfies $\varepsilon_2$-DP, then running both on the same database satisfies $(\varepsilon_1 + \varepsilon_2)$-DP.
\end{theorem}

This enables \emph{privacy accounting}, or tracking cumulative privacy loss across multiple queries.

\subsection{Alternative Differential Privacy Mechanisms}

While the Laplace mechanism is foundational, several other mechanisms have been developed to optimize different objectives after Dwork's initial mechanism was released. Notably:

\begin{itemize}
    \item \textbf{Gaussian Mechanism} \cite{dwork_algorithmic_2014}: Dwork expands her own work by proposing the addition of Gaussian noise $\mathcal{N}(0, \sigma^2)$ where $\sigma = \frac{\sqrt{2\ln(1.25/\delta)} \cdot \Delta f}{\varepsilon}$ \footnote{Assuming the original $(\varepsilon,\delta)-DP$ bound is being used.}. This change provides $(\varepsilon, \delta)$-differential privacy, a relaxed variant that allows a small probability $\delta$ of privacy failure. The modification is now used in many modern deep learning applications like DP-SGD \cite{abadi_deep_2016}.
    
    \item \textbf{Exponential Mechanism} \cite{mcsherry_mechanism_2007}: For non-numeric queries (e.g., selecting the best element from a set), McSherry et al. propose sampling the output $r$ with the probability proportional to $\exp(\varepsilon \cdot u(D, r) / (2\Delta u))$, where $u$ is a utility function. This strategy is used for recommendations and ranking problems.
    
    \item \textbf{Sparse Vector Technique (SVT)} \cite{dwork_algorithmic_2014}: Dwork also proposes a mechanism to more efficiently answer many queries by only reporting when query results exceed a threshold. This is particularly useful for query optimization, where most queries return negative results.
    
    \item \textbf{Sample-and-Aggregate} \cite{nissim_smooth_2007}: Nissim et al. propose partitioning data into blocks, computing the function on each block, then aggregating the output with noise. This provides better accuracy when the underlying function is smooth or has low sensitivity on small datasets.
    
    \item \textbf{Propose-Test-Release (PTR)} \cite{dwork_differential_2009}: Dwork also proposed a system that checks whether data satisfies predetermined properties (e.g., enough samples) before releasing more accurate answers. This adapts the privacy-utility tradeoff to the actual dataset characteristics.
\end{itemize}

Each mechanism optimizes the privacy-utility tradeoff to address different scenarios, demonstrating that differential privacy is not a single strategy, but rather a framework that can be adapted to individual applications.

\section{The Privacy-Utility Tradeoff}
\label{sec:tradeoffs}

\subsection{Utility Under Statistical Uncertainty}

Differential privacy guarantees privacy but adds noise, reducing utility. Makhdoumi and Fawaz \cite{makhdoumi_privacy-utility_2013} characterized this tradeoff when the data distribution is unknown:

\textbf{Setup:} Let $\theta$ be an unknown parameter we want to estimate from database $D$.
\begin{itemize}
    \item \textbf{Utility:} Estimation error $\mathbb{E}[(\hat{\theta} - \theta)^2]$
    \item \textbf{Privacy:} Mutual information $I(\theta; \mathcal{M}(D))$ (alternative privacy metric)
\end{itemize}

\textbf{Key Result:} For any $\varepsilon$-DP mechanism:
\begin{equation}
\text{MSE}(\hat{\theta}) \geq \frac{\sigma^2_\theta}{n} + \frac{\Delta^2}{\varepsilon^2 n}
\end{equation}
where $\sigma^2_\theta$ is the inherent variance and $\Delta$ is sensitivity.

\textbf{Interpretation:} Privacy noise adds an unavoidable $O(1/\varepsilon^2)$ term to error. Better privacy (smaller $\varepsilon$) naturally results in diminished utility, as shown in Figure~\ref{fig: privacy-utility}.  

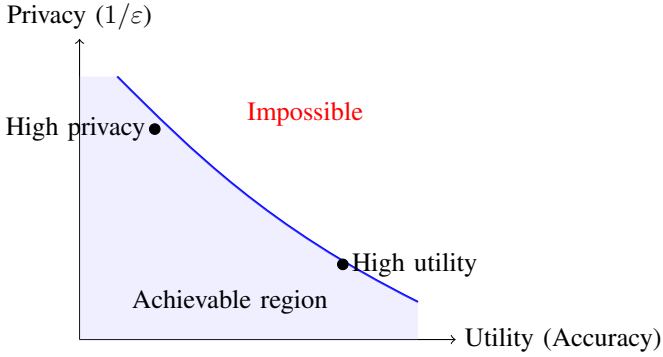
\begin{figure}[h]
\begin{center}
\begin{tikzpicture}[scale=1.0]
    \draw[->] (0,0) -- (5,0) node[right] {Utility (Accuracy)};
    \draw[->] (0,0) -- (0,4) node[above] {Privacy ($1/\varepsilon$)};
    
    \draw[thick, blue] (0.5,3.5) .. controls (1.5,2.5) and (2.5,1.5) .. (4.5,0.5);
    
    \fill[blue!20, opacity=0.3] (0.5,3.5) .. controls (1.5,2.5) and (2.5,1.5) .. (4.5,0.5) -- (4.5,0) -- (0,0) -- (0,3.5) -- cycle;
    
    \node at (3,3) {\textcolor{red}{Impossible}};
    
    \filldraw[black] (1,2.8) circle (2pt) node[left] {High privacy};
    \filldraw[black] (3.5,1) circle (2pt) node[right] {High utility};
    
    \node at (2,0.5) {Achievable region};
\end{tikzpicture}
\caption{The Privacy-Utility Tradeoff}
\label{fig: privacy-utility}
\end{center}
\end{figure}

\section{Adding Fairness: The Three-Way Tradeoff}

\subsection{Fairness Metrics in ML}

Consider a binary classifier $h: \mathcal{X} \to \{0,1\}$ trained on private data. Let $A$ denote a sensitive attribute (e.g., race, gender).

\begin{definition}[Demographic Parity]
$h$ satisfies demographic parity if:
\begin{equation}
\Pr[\hat{Y} = 1 | A = 0] = \Pr[\hat{Y} = 1 | A = 1]
\end{equation}
\end{definition}

\begin{definition}[Equalized Odds]
$h$ satisfies equalized odds if for $y \in \{0,1\}$:
\begin{equation}
\Pr[\hat{Y} = 1 | Y = y, A = 0] = \Pr[\hat{Y} = 1 | Y = y, A = 1]
\end{equation}
\end{definition}

This leads to:
\begin{definition}[Fairness Violation Under Demographic Parity]
\begin{equation}
    F(\mathcal{M}) = |Pr[\hat{y}=1|A=0] - Pr[\hat{y}=1|A=1]|
\end{equation}
\end{definition}

\subsection{Privacy Degrades Fairness: Empirical Evidence}

Pannekoek and Spigler \cite{pannekoek_investigating_2021} trained 4 neural networks (simple NN, fair NN, DP NN, and DP-Fair NN) with DP-SGD and found. 
The privacy mechanism was implemented using DP-SGD (Differentially Private Stochastic Gradient Descent), where clip gradients were used to bound sensitivity. Calibrated Gaussian noise is then added to gradients and the privacy budget is tracked using a moments accountant. The fairness mechanisms made use of a demographic parity constraint where a fairness penalty term was added to the loss function. This minimized the difference in positive prediction rates across groups, resulting in a balance of fairness and accuracy. 

Their work, summarized in Table~\ref{tab:pannekoek_compact}, demonstrated that, as expected, noise injection for privacy protection degrades learning, especially for smaller privacy budgets (lower $\varepsilon$). Furthermore, the inclusion of DP mechanisms disproportionately impacts minority demographic groups, making the model less fair. There seemed to be an upper bound to this effect, however, as a certain degree of fairness constraints could be added without significantly degrading the model utility. Unfortunately, applying the DP-Fair model, which combined both privacy and fairness, showed the largest utility degradation.

\begin{table*}[h!]
\centering
\caption{Comprehensive comparison of models on Adult dataset ($\varepsilon=0.1$, $\delta=0.00001$ for DP models) reproduced from the work by Pannekoek and Spigler}
\label{tab:pannekoek_compact}
\begin{tabular}{llcccl}
\toprule
\textbf{Type} & \textbf{Model} & \textbf{Privacy} & \textbf{Accuracy (\%)} & \textbf{Risk Diff.} & \textbf{Notes} \\
\midrule
\multirow{4}{*}{Neural Net} 
& S-NN & None & $84.14 \pm 0.34$ & $0.131 \pm 0.015$ & Baseline \\
& F-NN & None & $79.25 \pm 3.50$ & $0.057 \pm 0.007$ & Fair (lenient) \\
& DP-NN & $\varepsilon=0.1$ & $84.03 \pm 0.05$ & $0.136 \pm 0.002$ & Private, unfair \\
& \textbf{DPF-NN} & $\varepsilon=0.1$ & $\mathbf{82.98 \pm 0.19}$ & $\mathbf{0.048 \pm 0.002}$ & \textbf{Private \& fair} \\
\midrule
\multirow{4}{*}{Log. Reg.} 
& LR & None & $83.80 \pm 0.23$ & $0.158 \pm 0.006$ & Baseline \\
& FairLR & None & $77.39 \pm 5.21$ & $0.010 \pm 0.007$ & Fairest \\
& PrivLR & $\varepsilon=0.1$ & $62.63 \pm 14.80$ & $0.088 \pm 0.081$ & Large variance \\
& PFLR* & $\varepsilon=0.1$ & $74.91 \pm 0.40$ & $0.003 \pm 0.004$ & Low accuracy \\
\bottomrule
\end{tabular}
\end{table*}

\subsection{Why Privacy and Fairness Conflict}

\textbf{Intuition:} Fairness metrics require an accurate estimation of group-specific statistics; meanwhile, privacy noise disproportionately affects smaller groups.

\begin{example}[Minority Group Amplification]
Suppose group $A=0$ has $n_0 = 10000$ samples and group $A=1$ has $n_1 = 100$ samples. To estimate $\Pr[\hat{Y}=1|A=a]$ with $\varepsilon$-DP:
\begin{align}
\text{Noise std for } A=0: &\quad \sigma_0 \propto 1/(\varepsilon \sqrt{n_0}) = 1/(100\varepsilon) \\
\text{Noise std for } A=1: &\quad \sigma_1 \propto 1/(\varepsilon \sqrt{n_1}) = 1/(10\varepsilon)
\end{align}

The minority group has 10× more relative noise, making fairness metrics less reliable.
\end{example}

\section{Toward Fundamental Limits: A Research Direction}

\subsection{The Central Question}
While experimental evidence exists, a unified theoretical
characterization remains incomplete. The gap in existing research presents the question: 
\emph{Can we characterize the \emph{Pareto frontier} of achievable $(\varepsilon, \text{utility}, \text{fairness})$ triples? }

Using this lemma justifies the existence of the Pareto frontier: 
\begin{observation}[Frontier Existence]
    The Pareto frontier exists and is non-empty for all $\varepsilon > 0$.
    
    By compactness, the multi-objective optimization problem has a solution. The Pareto frontier is the boundary of the achievable region.
\end{observation}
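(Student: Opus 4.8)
The plan is to make the informal ``by compactness'' argument precise by (i) fixing $\varepsilon$ and identifying the set of admissible mechanisms with a compact subset of a finite-dimensional space, (ii) verifying that the utility and fairness functionals are continuous on that set, and (iii) invoking the fact that a continuous image of a nonempty compact set is again nonempty and compact, so its nondominated frontier is nonempty.

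First I would fix $\varepsilon > 0$ and restrict attention to mechanisms on a fixed finite data domain and finite output range $\mathcal{R}$. Any randomized mechanism $\mathcal{M} : \mathcal{D} \to \mathcal{R}$ is then specified by its transition probabilities $\{ \Pr[\mathcal{M}(D) = r] \}_{D, r}$, a point in a product of probability simplices. The $\varepsilon$-DP requirement $\Pr[\mathcal{M}(D) \in S] \le e^{\varepsilon}\, \Pr[\mathcal{M}(D') \in S]$ over all neighboring $D, D'$ and all $S$ reduces to a finite system of non-strict (hence closed) linear inequalities, so the set $\mathcal{P}_\varepsilon$ of $\varepsilon$-DP mechanisms is a closed, bounded, convex subset of Euclidean space and is therefore compact by Heine--Borel. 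Nonemptiness of $\mathcal{P}_\varepsilon$ is immediate: the Laplace mechanism of the earlier theorem is $\varepsilon$-DP for every $\varepsilon > 0$, so $\mathcal{P}_\varepsilon \ne \emptyset$.

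Next I would define the objective map $\Phi(\mathcal{M}) = (U(\mathcal{M}), F(\mathcal{M}))$, where $U$ is a utility score (e.g.\ negative expected error) and $F$ is the demographic-parity violation introduced above. Each coordinate is a continuous function of the transition probabilities, so $\Phi$ is continuous and the achievable region $\mathcal{A}_\varepsilon = \Phi(\mathcal{P}_\varepsilon)$ is the continuous image of a nonempty compact set, hence itself nonempty and compact. To extract the frontier I would minimize a strictly positively weighted scalarization $\lambda_1(-U) + \lambda_2 F$ with $\lambda_1, \lambda_2 > 0$ over $\mathcal{A}_\varepsilon$; a continuous function on a nonempty compact set attains its minimum, and any minimizer of a strictly positive weighted sum is Pareto-optimal. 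This shows the nondominated set, which lies on the boundary $\partial \mathcal{A}_\varepsilon$, is nonempty, and the full three-way frontier over $(\varepsilon, \text{utility}, \text{fairness})$ is then the union of these slices over $\varepsilon > 0$.

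The main obstacle is continuity of the fairness functional $F$. Because $F$ is built from the conditional probabilities $\Pr[\hat{Y} = 1 \mid A = a]$, it is a ratio whose denominator is the group mass $\Pr[A = a]$, so continuity — and even well-definedness — can fail on the boundary where a group has vanishing representation. I would handle this by either fixing the feature and group distribution so that each $\Pr[A = a]$ is bounded away from zero, keeping the denominators uniformly positive and $F$ Lipschitz, or by defining $F$ through its continuous extension to the closure. A secondary subtlety is the passage from a finite to a continuous output range, where compactness of the discrete parametrization must be replaced by weak-$\ast$ compactness of the space of Markov kernels (via Prokhorov or Banach--Alaoglu); restricting to the finite, counting-query setting in which these mechanisms naturally live avoids this complication.
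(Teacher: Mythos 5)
Your proposal is correct and follows the same route the paper gestures at --- compactness of the feasible set plus continuity of the objectives --- but the paper never actually executes this argument: the observation's ``proof'' consists of the two assertion sentences embedded in its own statement. Your version supplies everything the paper leaves implicit: the parametrization of mechanisms as points in a product of probability simplices, the fact that the $\varepsilon$-DP constraints are finitely many closed linear inequalities (so Heine--Borel applies), an explicit nonemptiness witness, continuity of the objective map, and the positively weighted scalarization step that actually produces a Pareto-optimal point rather than merely a feasible one. You also catch the one genuine technical hazard the paper ignores, namely that the demographic-parity functional $F(\mathcal{M}) = |\Pr[\hat{Y}=1 \mid A=0] - \Pr[\hat{Y}=1 \mid A=1]|$ conditions on group membership and can fail to be continuous (or defined) where a group's mass vanishes; bounding the group masses away from zero is the right fix. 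Two minor points: the Laplace mechanism has output range $\mathbb{R}$, so it is not literally a point of your finite-range parametrization --- a constant, data-independent mechanism is a cleaner witness that $\mathcal{P}_\varepsilon \neq \emptyset$; and your statement that the nondominated set \emph{lies on} the boundary is more accurate than the paper's claim that the frontier \emph{is} the boundary, since the boundary of the achievable region generally also contains dominated points.
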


\subsection{Lower Bounds of Privacy Utility and Fairness}
The lower bound of the privacy-utility tradeoff for any $\varepsilon-DP$ mechanism $\mathcal{M}$ can be defined as \cite{makhdoumi_privacy-utility_2013}:
\begin{equation}
    U(\mathcal{M}) \geq U_0 - O(d/(\varepsilon n))
\end{equation}
where $U_0$ is utility without privacy, $d$ is dimensionality, $n$ is sample size

Furthermore, the privacy-fairness bound for minority groups with representation $p$ can be defined as \cite{cummings_compatibility_2019}:

\begin{equation}
    F(\mathcal{M}) \geq \Omega(1/(\varepsilon\sqrt{(np)}))
\end{equation}

Where DP noise amplifies for smaller groups as a natural result of a smaller sample size. 

This widely used lower bound demonstrates how private estimation of subgroup statistics requires noise of scale$1/(\varepsilon n)$ \cite{cummings_compatibility_2019}. The effective sample size for the subgroup, however, is $np$. Noise relative to subgroup magnitude scales as 

\begin{equation}
\mathrm{SE}_{\text{sampling}} \approx \sqrt{\frac{q(1-q)}{n_p}}, 
\qquad
\mathrm{SE}_{\text{DP-noise}} \approx \sqrt{2}\,\frac{1}{\varepsilon\, n_p}
\end{equation}
\begin{equation}
\frac{\mathrm{SE}_{\text{DP-noise}}}{\mathrm{SE}_{\text{sampling}}}
\approx
\frac{\sqrt{2}}{\varepsilon}\cdot\frac{1}{\sqrt{n_p}}\cdot\frac{1}{\sqrt{q(1-q)}}
\sim 
\frac{1}{\varepsilon\sqrt{n_p}}.
\end{equation}

\subsection{Impossibility Result}
The existence of the Pareto frontier and the lower bounds in equations (19) and (20) can be combined to create a critical sample size for the impossibility result:

For utility to remain above threshold $(U > 0.5)$ while maintaining 
fairness constraint $(F < F_{target})$, we require:

\begin{equation}
    d/(\varepsilon n) < U_0 - 0.5  \text{ AND }  1/(\varepsilon \sqrt{(np)}) < F_{target}
\end{equation}

Solving for n in both inequalities and taking the more restrictive 
bound yields:

\begin{equation}
    n^* = \Theta(d/(\varepsilon^2p·F^2))
\end{equation}

This threshold represents the sample size below which the privacy-
fairness-utility tradeoff becomes infeasible for the specified 
constraints. This shows that fairness, privacy, and utility jointly form a Pareto frontier: you cannot improve one without degrading at least one of the others unless you increase sample size.

This finding further emphasizes the need to either increase sample size or mitigate minority representation concerns through alternative methods, such as the addition of synthetic data. 

\subsection{The Significance of the Fairness Addition}
To emphasize the importance of the addition of fairness in differential privacy applications, consider the following case study: 

Consider a machine learning system designed to predict recidivism risk for criminal defendants, similar to the COMPAS system deployed in U.S. courts \cite{adrienne_brackey_analysis_2019}. The system uses historical criminal justice data containing sensitive information about individuals.

Let $D = \{(x_i, y_i, a_i)\}_{i=1}^n$ be a dataset where:
\begin{itemize}
    \item $x_i \in \mathbb{R}^d$ represents features (age, prior convictions, charge severity, etc.)
    \item $y_i \in \{0, 1\}$ indicates whether individual $i$ reoffended within 2 years
    \item $a_i \in \{0, 1\}$ is a protected attribute (e.g., race: White=0, Black=1)
\end{itemize}

The goal is to train a classifier $h: \mathbb{R}^d \to \{0,1\}$ that predicts recidivism risk while satisfying:
\begin{enumerate}
    \item \textbf{Privacy}: $\varepsilon$-differential privacy to protect individual records
    \item \textbf{Utility}: High accuracy, measured by $\text{Acc} = \Pr[\hat{Y} = Y]$
    \item \textbf{Fairness}: Equalized false positive rates across racial groups
\end{enumerate}

This introduces a key conflict. Suppose the dataset has:
\begin{itemize}
    \item Group $A=0$ (White): $n_0 = 5000$ samples, base rate $p_0 = 0.30$
    \item Group $A=1$ (Black): $n_1 = 500$ samples, base rate $p_1 = 0.45$
\end{itemize}

\textbf{Without Privacy:} A standard logistic regression achieves:
\begin{align}
\text{Acc}_0 &= 0.82, \quad \text{FPR}_0 = 0.12 \\
\text{Acc}_1 &= 0.79, \quad \text{FPR}_1 = 0.15
\end{align}

Fairness violation: $|\text{FPR}_1 - \text{FPR}_0| = 0.03$

\textbf{With $\varepsilon=1$ Differential Privacy:} Using DP-SGD \cite{abadi_deep_2016} for training:
\begin{align}
\text{Acc}_0 &= 0.77, \quad \text{FPR}_0 = 0.18 \\
\text{Acc}_1 &= 0.68, \quad \text{FPR}_1 = 0.28
\end{align}

Fairness violation increased: $|\text{FPR}_1 - \text{FPR}_0| = 0.10$

The minority group experiences:
\begin{itemize}
    \item 11\% accuracy drop (0.79 $\to$ 0.68) vs 5\% for majority (0.82 $\to$ 0.77)
    \item 87\% increase in false positive rate (0.15 $\to$ 0.28) vs 50\% for majority
\end{itemize}

The standard error in estimating group-specific statistics scales as:
\begin{equation}
\text{SE}(a) \propto \frac{1}{\sqrt{n_a \cdot \varepsilon^2}}
\end{equation}

In the above example:
\begin{align}
\text{SE}(A=0) &\propto \frac{1}{\sqrt{5000 \cdot 1}} \approx 0.014 \\
\text{SE}(A=1) &\propto \frac{1}{\sqrt{500 \cdot 1}} \approx 0.045
\end{align}

The minority group has $3.2\times$ higher estimation error, directly translating to worse fairness metrics.

This illustrative case study demonstrates that:
\begin{enumerate}
    \item Privacy mechanisms amplify existing disparities in data representation
    \item Fair outcomes require either: (a) larger minority samples, (b) weaker privacy, or (c) accepting lower overall utility
    \item In high-stakes domains like criminal justice, these tradeoffs have serious real-world consequences
\end{enumerate}

The fundamental issue is not a flaw in differential privacy, but rather an inherent mathematical constraint due to the nature of conflicting definitions: protecting privacy through noise injection necessarily degrades the statistical reliability of estimates, with disproportionate impact on groups with less data representation.

\section{Practical Strategies for Navigating Tradeoffs}

While the theoretical impossibility results demonstrate fundamental limits, practitioners have developed several strategies to navigate the privacy-utility-fairness tradeoff space in real applications.

\subsection{Data Augmentation and Synthetic Data}

One common approach to mitigate minority group disadvantage is to augment the training data itself:

\begin{itemize}
    \item \textbf{Differentially Private Synthetic Data Generation} \cite{lin_differentially_2025}: Synthetic records that preserve statistical properties can be generated while satisfying DP. Tools like PrivBayes or DP-GAN can create artificial minority group samples to balance datasets before training.
    
    \item \textbf{Oversampling with Privacy Budget Allocation}: Allocating a larger privacy budget ($\varepsilon$) to minority groups, can also address fairness. However, this requires careful consideration of group privacy guarantees \cite{kearns_preventing_2018}.
\end{itemize}

While synthetic data seems like a promising solution, it may not capture true distributional properties, potentially reducing model utility on real data \cite{watson_comprehensive_2024}.

\subsection{Adaptive Privacy Budgets}

Rather than uniform $\varepsilon$ across all groups:

\begin{itemize}
    \item \textbf{Group-Specific Privacy Parameters}: Allow different groups to have different privacy levels based on their representation and vulnerability \cite{cummings_compatibility_2019}. For example, set $\varepsilon_{\text{minority}} = 2$ and $\varepsilon_{\text{majority}} = 0.5$.
    
    \item \textbf{Privacy Amplification by Subsampling}: In mini-batch training, subsampling amplifies privacy, providing stronger guarantees for overrepresented groups without additional cost \cite{balle_privacy_2018}.
\end{itemize}

Unfortunately, differential treatment of groups also raises ethical questions about who deserves more privacy protection.

\subsection{Post-Processing for Fairness}

After training a DP model, fairness corrections can be applied to circumvent addressing the tradeoff in the model-building stage:

\begin{itemize}
    \item \textbf{Threshold Optimization} \cite{hardt_equality_2016}: Adjust decision thresholds per group to satisfy fairness constraints. Since post-processing preserves DP, this doesn't degrade privacy.
    
    \item \textbf{Calibration}: Ensure predicted probabilities are well-calibrated across groups, improving both fairness and reliability \cite{pleiss_fairness_2017}.
\end{itemize}

Trade-off: Post-processing can reduce overall accuracy and may not address underlying model biases.

\subsection{Architecture and Algorithm Design}

Design models that are inherently more robust to privacy noise:

\begin{itemize}
    \item \textbf{Fair Representations} \cite{zemel_learning_2013}: Learn intermediate representations that are statistically independent of sensitive attributes before applying DP mechanisms.
    
    \item \textbf{Adaptive Clipping in DP-SGD} \cite{andrew_differentially_2022}: Dynamically adjust gradient clipping thresholds to reduce noise while maintaining privacy, improving utility especially for minority groups.
    
    \item \textbf{Transfer Learning}: Pre-train on public data, then fine-tune with DP on sensitive data, reducing the privacy budget needed for good performance \cite{papernot_semi-supervised_2017}.
\end{itemize}

Trade-off: More complex architectures may be harder to interpret and validate.

\subsection{Choosing the Right Metric}

Different contexts call for different priorities:

\begin{itemize}
    \item \textbf{High-stakes decisions} (criminal justice, lending): Prioritize fairness and interpretability, accept lower privacy ($\varepsilon > 1$) or increased data collection requirements.
    
    \item \textbf{Medical research}: Prioritize strong privacy ($\varepsilon < 1$), accept reduced utility, use federated learning to increase effective sample size \cite{mcmahan_communication-efficient_2023}.
    
    \item \textbf{Personalized services}: Balance all three with moderate privacy ($\varepsilon \approx 1-3$), group-specific adjustments, and transparency about limitations.
\end{itemize}

Figure~\ref{fig: strategy-selection} demonstrates a decision tree for selecting the optimal strategy for common applications. 

\begin{figure*}[h!]
    \centering
    \includegraphics[width=1.0\linewidth]{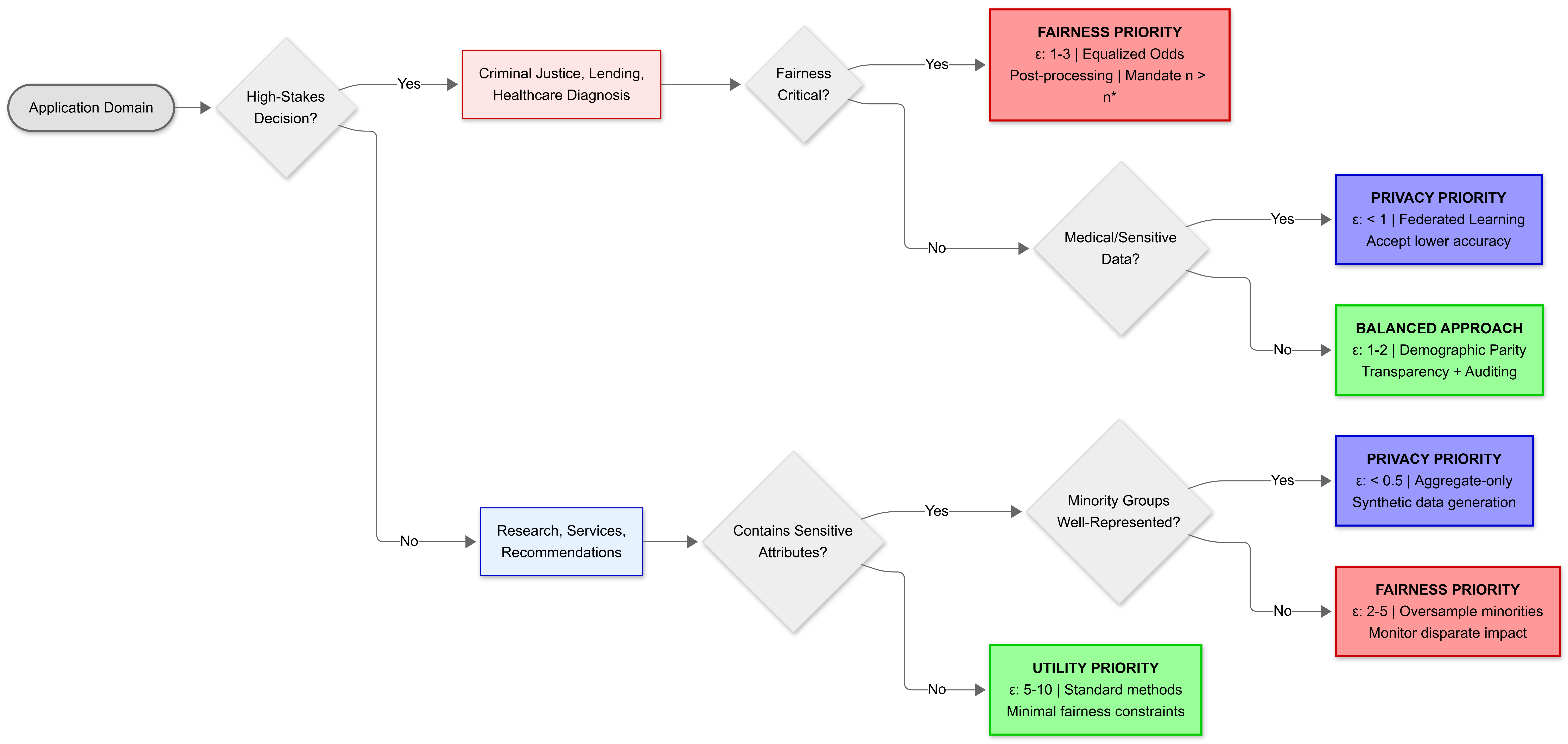}
    \caption{A Decision Tree for Tradeoff Mitigation Selection}
    \label{fig: strategy-selection}
\end{figure*}

\section{Conclusion}

The evolution from Dalenius's theory of Statistical Disclosure Control to modern differential privacy reveals a central pattern in privacy research: intuitive goals often prove impossible. Ultimately, some principles must be relaxed to create usable metrics. Differential privacy has become widely used not by achieving perfect privacy, as early privacy researchers thought would be necessary, but by providing a mathematically rigorous, achievable guarantee.

This success, of course, comes with costs: privacy trades off against utility and fairness in fundamental ways. Understanding these limits and characterizing what is and isn't achievable remains an active area of research with both theoretical and practical implications.

As machine learning systems increasingly operate on sensitive data, understanding these fundamental limits becomes critical for both algorithm design and policy decisions.

\bibliography{references.bib}

\end{document}